\def\ps@pprintTitle{%
  \def\@oddhead{}%
  \def\@evenhead{}%
  \def\@oddfoot{}%
  \def\@evenfoot{}%
}
\theoremstyle{plain}
\newtheorem{proposition}{Proposition}
\newtheorem{theorem}{Theorem}
\theoremstyle{definition}
\newtheorem{example}{Example}
\newtheorem{definition}{Definition}
\newcommand{\inputspace}{\mathcal{R} \mathfrak{(X)}}
\newcommand{\domain}{\mathcal{R}\mathfrak{(X)}}
\newcommand{\argmax}{\mathop{\rm arg~max}\limits}
\begin{document}

\begin{frontmatter}

\title{Mill's canons meet social ranking: A characterization of plurality}

 \author[label1]{Takahiro Suzuki\textsuperscript{[0000-0002-3436-6831]}}
  \author[label2]{Michele Aleandri\textsuperscript{[0000-0002-5177-8176]}}
   \author[label3]{Stefano Moretti\textsuperscript{[0000-0003-3627-3257]}}
   
 \affiliation[label1]{organization={Department of Civil Engineering, 
 Graduate School of Engineering, The University of Tokyo},
             addressline={Hongo Campus, 7-3-1 Hongo, Bunkyo-ku}, 
            city={Tokyo},
            postcode={113-8656}, 
            country={Japan}
            suzuki-tkenmgt@g.ecc.u-tokyo.ac.jp}

 \affiliation[label2]{organization={LUISS University},
             addressline={Viale Romania, 32},
             city={Rome},
             postcode={00197},
             country={Italy} \\
             maleandri@luiss.it}
\affiliation[label3]{organization={LAMSADE, CNRS, Universit{\'e} Paris-Dauphine, Universit{\'e} PSL},
             city={Paris},
             postcode={75016},
             country={France} \\
stefano.moretti@lamsade.dauphine.fr}

\begin{abstract}
In his book entitled ``A System of Logic, Ratiocinative and Inductive'' (1843), John Stuart Mill proposed principles of inductive reasoning in the form of five canons. To date, these canons are classic methods for causal reasoning: they are intended to single out the circumstances that are connected to the phenomenon under focus. The present paper reinterprets Mill's canons in the modern theory of social ranking solutions, which aims to estimate the power of individuals based on teams' performances. We first apply Mill's canons to determine the key success factors in cooperative performances and then characterize plurality using a strong version of Mill's first canon. Plurality is also compatible with most other canons. Thus, our results demonstrated a hidden link between classical causal reasoning and the theory of social ranking solutions.
\end{abstract}

\begin{keyword}
plurality \sep coalitional social choice function\sep Mill's method \sep Causal inference.
\end{keyword}
\end{frontmatter}

\section{Introduction}\label{section:introduction}
In this paper the classic theory of induction (Mill's canons in \cite{Mill1882}) is linked and incorporated into the modern theory  of \textit{Social Ranking Solutions} (SRSs) used to determine the most contributing individual(s) in cooperative situations. 
In \cite{Mill1882}, John Stuart Mill proposed methods of induction in the form of five canons (method of agreement, method of difference, joint method of agreement and difference, method of residues, and method of concomitant variation). Each method is intended to single out the circumstances that are ``connected'' to the phenomenon under focus. Mill’s argument includes two types of inquiries: determining the cause of a given effect, and determining the effect of a given cause. In this paper, we adopt the second inquiry type (i.e., finding the causes of a phenomenon of interest). 

In contrast, the theory of SRSs aims to evaluate the contribution/power of each individual based on the performance ranking of their coalitions. This is a relatively new theory established by \cite{Moretti2017}, with many subsequent studies conducted on the topic over the last decade. In fact, many SRSs have been studied axiomatically (e.g., \textit{CP majority} \cite{Fayard2018a, Allouche2021, Suzuki2024a}, \textit{ordinal Banzhaf index} (OBI) \cite{Khani2019}, \textit{lexicographic excellence solution} (LES), their variants in \cite{Bernardi2019,Algaba2021,Aleandri2024,Suzuki2024,Suzuki2024c}, etc.). While the standard SRS is a relation-valued function (i.e., detecting a ranking of individuals), a set-valued version was also proposed in \cite{Konieczny2022} (i.e., determining the set of the most contributing individuals). This SRS is called a \textit{coalitional social choice function} (CSCF).  

This study reveals the hidden link between these two apparently independent theories. Once successful cooperation is regarded as a phenomenon, determining the individuals that contribute the most to cooperation is equivalent to determining the cause of the phenomenon. This observation enables us to consider the SRS theory as a type of causal inference; hence, we can expect that Mill's canons will shed new light on the study of SRSs. 

From this perspective, we first translate Mill's canons as axioms for CSCFs. In particular, we focus on Mill's first canon (the method of agreement), second canon (the method of difference), third canon (the joint method of agreement and difference), and fifth canon (the method of concomitant variation). We exclude the fourth canon from our main scope as it involves a more complex situation where multiple phenomena exist. This point is discussed in more detail in Section \ref{section:conclusion}. 

Our axiomatic investigation reveals that a CSCF, called \textit{plurality}, is highly connected to Mill's canons. Plurality is a new CSCF inspired by the so-called plurality in voting theory, which selects individuals who appear the most frequently in the top equivalence class. Our main results show that, when interpreted properly, plurality satisfies all of Mill's first, second, third, and fifth canon, whereas CSCFs induced by other well-known SRSs (specifically, LES and OBI) fail to satisfy some of them. This finding supports the usefulness of plurality in determining the individuals that contribute the most to successful cooperation.  

Finally, we provide additional comments on Mill's methods, which has remained influential until today. The interpretation of Mill’s methods  has been debated across many fields such as philosophy \cite{J.L.Mackie1967}, logic models \cite{Prueitt1998,Ducheyne2008a,Finn2011,Pietka2015}, and social choice theory \cite{Suzuki2023_Mill}. In \cite{Suzuki2023_Mill}, it is also conducted a comprehensive study of Mill's canons using a model close to dichotomous voting (such as Approval Voting), where each instance is either a positive/negative instance on the phenomenon.  The novelty of the proposed model lies in the ranking of instances (coalitions in SRS terminology). 

The remainder of this paper is organized as follows. Section \ref{section:model} describes the basic models, including an interpretation of Mill's methods in terms of SRSs. Our main results are presented in Section \ref{section:result}, where possible interpretations of Mill's canons are discussed and a characterization of plurality is presented. Section \ref{section:conclusion} concludes.

\section{Model}
\label{section:model}
\subsection{Basics}
Let $X$ be a finite set of individuals, with $3 \leq |X| < + \infty$.  Let $\mathfrak{X} := 2^{X}\setminus\emptyset$ denote the set of all the non empty subsets of $X$. A weak order, i.e. a reflexive, complete, and transitive binary relations\footnote{A binary relation $R$ on set $A$ is called a \textit{weak order} if
it is \textit{reflexive} (for all $a \in A$ and $aRa$), \textit{complete} (for all $a,b \in A$, we have either $aRb$ or $bRa$), and \textit{ transitive }(for all $a,b,c \in A$, if $aRb$ and $bRc$, then $aRc$)},  on set $\mathfrak{X}$ is called \emph{coalitional ranking} and the set of coalition rankings is denoted as $\mathcal{R}(\mathfrak{X})$. 
The asymmetric and symmetric
parts are denoted by $P(\succsim)$ and $I(\succsim)$, respectively, where 
$P(\succsim) := \{ (a,b)\mid (a,b) \in \succsim$  and $(b,a) \notin \succsim \}$
and $I(\succsim) := \{ (a,b)\mid (a,b) \in \succsim$  and $(b,a) \in \succsim \}$.
Given a weak order $\succsim$ the quotient order is indicated by $\succsim:\Sigma_1\succ\ldots\succ\Sigma_l$, with $l\in\mathbb{N}$, and if $l=2$ the weak order $\succsim$ is called \textit{dichotomous}. \\
For $x \in X$ and $\Delta\subseteq \mathfrak{X}$, the set of all elements in $\Delta$ containing $x$ is denoted by $\Delta[x] := \{S\in\Delta\mid x\in S\}$; the intersection of all elements in $\Delta$ is denoted as $\bigcap \Delta := \{ x \mid  x \in S,\ \forall S \in \Delta \}$. For $k\in\mathbb{N}$, define $\mathfrak{X}^{(k)}:=\{S\in\mathfrak{X}\mid \left| S \right|=k\}$.  

\begin{definition}
\label{definition:CSCF}
A \textit{coalitional social choice function} (CSCF) is a map
$F\colon\inputspace\rightarrow \mathfrak{X}$.
\end{definition}

A CSCF was studied in \cite{Konieczny2022} as a set-valued version of a SRS and we recall the definition of three specific CSCFs. To this purpose, we first need some preliminary notation. Given 
$\succsim\in\inputspace$ with quotient order $\succsim:\Sigma_{1}\succ\cdots\succ\Sigma_{l}$, and $x\in X$, let $\theta_\succsim(x):=(x_{1},x_{2},\cdots, x_{l})$, where $x_{k}:=\left| \Sigma_{k}[x]\right|$. We define the lexicographic relation $\ge^L$ between any pair of vectors $\theta_\succsim(x)$ and $\theta_\succsim(y)$ as follows: $\theta_\succsim(x) \ge^L \theta_\succsim(y)$  if, either  $\theta_\succsim(x)=\theta_\succsim(y)$, or it exists $t$ such that $x_s=y_s$ for all $s=1,\dots, t-1$ and  $x_t>y_t$.
Moreover, for $x\in X$ and $\succsim\in\inputspace$, let $u_{x}^{+,\succsim}:=|\{S\subseteq X\setminus \{x\}\mid S\cup\{x\}\succ S\}|$, $u_{x}^{-,\succsim}:=|\{S\subseteq X\setminus \{x\}\mid S\succ S\cup\{x\}\}|$, and then $s_{x}^{\succsim}:=u_{x}^{+,\succsim}-u_{x}^{-,\succsim}$.

\begin{definition}
\label{definition:example_of_CSCFs}
For $\succsim\in\inputspace$ with  $\succsim:\Sigma_{1}\succ\cdots\succ\Sigma_{l}$, 
\begin{itemize}
\item \textit{Plurality} $F^P\colon\inputspace\rightarrow \mathfrak{X}$ is a CSCF such that 
$$F^{P}(\succsim) :=\argmax _{x\in X}\, x_{1};$$
\item \textit{Lexicographic excellence solution (LES)} $F^{L}\colon\inputspace\rightarrow \mathfrak{X}$ is a CSCF such that $$F^{L}(\succsim):= \{x\in X\mid \theta_{\succsim}(x) \ge^{L} \theta_{\succsim}(y) \text{\ for all\ } y \in X \}; $$
\item  \textit{Ordinal Banzhaf Index} (OBI) $F^{OBI}:\inputspace\rightarrow\mathfrak{X}$ is a CSCF such that $$F^{OBI}(\succsim):=\argmax_{x\in X} s_{x}^{\succsim}.$$
\end{itemize}
\end{definition}

The Plurality solution assigns each coalitional ranking to the set of players that appear most frequently in the first equivalence class. This is analogous to the popular voting rule called plurality, which selects candidates who are placed first by the greatest number of voters. To the best of our knowledge, this is the first time that plurality has been referred to in the study of SRSs (including CSCFs). The Lexicographic Excellence Solution \cite{Bernardi2019}  selects, among the players chosen by the plurality solution, the player who appears most frequently in the second equivalence class. In the case of a tie, the third equivalence class is considered, and so on, until all ties are broken and one single player is selected, or some ties still persist in the last equivalence class so that multiple players are selected. The Ordinal Banzhaf Index \cite{Haret2019a} compares the number of times a player improves a coalition's ranking minus the number of times their inclusion worsens the ranking.

\subsection{Mill's canons}
\label{subsection:Mill's_canons}

The situation considered in Mill's Canons is as follows: an analyst aims at estimating the cause of a phenomenon based on the given instances. Each instance is either positive (the phenomenon occurs) or negative (the phenomenon does not occur)\footnote{In the fifth canon, however, the extent of the phenomenon will be argued.}. Furthermore, each instance is equipped with certain circumstances (i.e., some circumstances hold under the instance and others do not). The aim is to identify which circumstance(s) is considered to be the cause of the phenomenon.

\begin{example}
\label{example:SRSs_and_Mill's_canons}
In our approach, we interpret Mill's canons within the terminology of the SRS coalitional framework using the following semantics (the lefthand side represents the terminology in Mill's canons, and the righthand side represents its interpretation in cooperative contexts): 

\begin{align*}
\text{phenomenon} &\leftrightarrow \text{successful cooperation} \\
\text{an instance} &\leftrightarrow \text{a coalition} \\
\text{a circumstance} &\leftrightarrow \text{an individual}.
\end{align*}

For example, following the multiple causation framework studied in \cite{ferey2016multiple}, the set $X$ (representing the circumstances) could denote tortfeasors (individuals) involved in a loss (the phenomenon) suffered by a victim. A (instance of)  potential damage to a victim can be attributed to alternative groups of tortfeasors (coalitions) who succeed in cooperating against the victim. 
In this way, a CSCF is identified with the induction rule in Mill's sense; that is, the method with which one determines the cause (key individuals) of the phenomenon (successful cooperation).  
\end{example}

\begin{example}\label{ex:frame}
Inspired by a toy example introduced in \cite{ferey2016overdetermined}, consider a set $X=\{1,2,3\}$ of three firms polluting a river by releasing hazardous chemicals. A threshold $\alpha$ exists, beyond which the chemical concentration becomes lethal to the fish. Suppose that owing to the quantities of emitted chemicals, no firm alone is sufficient to lethally pollute the river (above $\alpha$). However, we assume that the joint emissions of chemicals by firms $1$ and $2$, as well as by $1$ and $3$, are greater than $\alpha$, whereas the joint emissions of firms $2$ and $3$ remain below $\alpha$. 
We say that coalitions  $\{1,2\}$, $\{1,3\}$ and $\{1,2,3\}$ are successful in cooperating to produce the (lethal) phenomenon of pollution because their joint action is larger than the predefined $\alpha$, whereas all the other subsets of $X$ are not successful in cooperating. 
\end{example}

We now introduce Mill's first canon (p.482 in \cite{Mill1882}) as follows. 

\begin{quote}
\textit{\textbf{First canon}}   If two or more instances of the phenomenon under investigation have only one circumstance in common, the circumstance in which alone all the instances agree is the cause (or effect) of the given phenomenon.
\end{quote}

The first canon describes the \textit{ method of agreement}. According to this method, circumstance $x$ is considered the cause of the phenomenon if $x$ is a unique common circumstance for all positive instances.  Following Example \ref{ex:frame}, firm $1$ should be considered the causally liable firm, being at the intersection of all successful coalitions.

Mill's second canon (p.483 in \cite{Mill1882}) states

\begin{quote}
\textbf{\textit{Second canon}}  If an instance in which the phenomenon under investigation occurs, and an instance in which it does not occur, have every circumstance in common save one, that one occurring only in the former; the circumstance in which alone the two instances differ is the effect, or the cause, or an indispensable part of the cause, of the phenomenon.
\end{quote}

The second canon describes the \textit{ method of difference}. If a positive instance and a negative instance differ only in the existence of circumstance $x$ ($x$ holds in the former and does not hold in the latter), then $x$ is considered the cause of the phenomenon (in other words, the second canon describes the principle of the controlled experiment).
 
\begin{example}   
Consider the scenario from Example \ref{ex:frame} without firm $3$, leaving us with $X=\{1,2\}$. In this case, the only successful coalition that produces lethal amounts of chemicals consists of firms $\{1, 2\}$. According to Mill's second canon, firms 1 and 2 should be held equally responsible for causing harm.
\end{example}

Mill's third canon is as follows:  

\begin{quote}
\textbf{\textit{Third canon}}   If two or more instances in which the phenomenon occurs have only one circumstance in common, while two or more instances in which it does not occur have nothing in common save the absence of that circumstance, the circumstance in which alone the two sets of instances differ is the effect, or the cause, or an indispensable part of the cause, of the phenomenon. 
\end{quote}

Mill emphasizes the importance of the second canon (method of difference) for causal inference. However, Mill is also aware of the difficulty of applying it, especially when circumstances cannot be selected on one's own. Hence, a third canon is introduced to expand the scope of the method of difference by indirectly applying the method of agreement. Nevertheless, we did not use this canon in the present study because of its antecedents. Because it applies the method of agreement in the first stage (saying, "If two or more instances in which the phenomenon occurs have only one circumstance in common"), it can be applied only when the first canon is applied. Hence, logically, the third canon should be weaker than the first. This will be shown later based on our formal definition of the canons (the same holds in the dichotomous model in \cite{Suzuki2023_Mill}). 

For now, we skip the fourth canon because it applies to a different case where multiple phenomena are considered. We briefly discuss this at the end of Subsection \ref{subsection:absolute-top-interpretation}. 

Finally, we proceed to the fifth canon using the concomitant variation method. It is worth noting that we implicitly assumed that the characteristics of each instance were dichotomous: they were either positive or negative. As Mill notes, there are some phenomena in which such a dichotomous view is inappropriate. 

\begin{quote}
There remains a class of laws which it is impracticable
to ascertain by any of the three methods which I have attempted to characterize, namely the laws of those Permanent Causes, or indestructible natural agents, for which it is impossible either to exclude or to isolate. That is, we can neither hinder these laws from being present, nor contrive that they shall be present alone. (\cite{Mill1882}, pp.491-492)
\end{quote}

To address these phenomena, Mill added the following canon:

\begin{quote}
\textbf{\textit{Fifth canon}}   Whatever phenomenon varies in any manner whenever another phenomenon varies in some particular manner, is either a cause or an effect of that phenomenon, or is connected with it through some fact of causation.
\end{quote}

The fifth canon focuses on concomitant relationships between circumstances and phenomena. In our context, this means that if the extent of successful cooperation improves whenever $x$ attends it, then $x$ should be the cause of success. A more detailed discussion of the contextualization and formalization of the canons introduced earlier will be presented in the next section.

\subsection{Mill's canons as axioms for SRSs}
We demonstrate that Mill's canons can be understood properly in the context of SRSs. A remaining task, however, is the interpretation of "successful cooperation.” Among the keywords shown in Example \ref{example:SRSs_and_Mill's_canons}, "successful cooperation" is not formally defined in the model of SRSs. To understand Mill's canons as formal axioms of SRSs, we need to give a specific interpretation of "successful cooperation,” or a success in the cooperative contexts.  

One straightforward interpretation of success, which we call {\it relative interpretation}, yields a better result than some coalition $S_{0}$ (which is a reference point). For each coalition $S\in\mathfrak{X}$, $S$ is considered a positive (negative) instance of the phenomenon if $S\succ S_{0}$ $\left(S_{0}\succsim S\right)$. According to the relative-interpretation, first, second, and third canons are formally stated in the following as the properties of relative-agreement, relative-difference, and relative-joint method of agreement and difference, respectively.

\begin{definition}
The CSCF $F$ is said to satify 
\begin{itemize}
\item \textit{Relative-agreement} (RAG) if, for any $\succsim\in\inputspace$ and $S_{0}\in\mathfrak{X}$, if $\bigcap_{S\succ S_{0}} S$ is a singleton, then $F\left(\succsim\right)=\bigcap_{S\succ S_{0}} S$.  
\item \textit{Relative-difference} (RDF) if, for any $\succsim\in\inputspace$, $S_{0}\in\mathfrak{X}$, and $x\in X$, if $S\cup\{x\}\succ S_{0}\succsim S$ for all $S\subseteq X\setminus\{x\}$, then we have that $F\left(\succsim\right)=\{x\}$.
\item \textit{Relative-joint method of agreement and difference} (RJAD) if, for any $\succsim\in\domain$ and $S_{0}\in\mathfrak{X}$, if $\bigcap_{S\succ S_{0}}=\{x\}$, $\bigcap_{S_{0}\succsim S}S=\emptyset$, and $x\notin \bigcup_{S_{0}\succsim S} S$, then we have that $F(\succsim)=\{x\}$. 

\end{itemize}
\end{definition}
As it will be demonstrated in Proposition \ref{proposition:relative-interpretation}, the RAG, RDF, and RJAD properties are not logically independent. To circumvent this issue, we resort to an alternative interpretation called the {\it absolute top interpretation}. According to this interpretation, this phenomenon is associated with the best results.
Coalition $S$ is considered a positive instance of the phenomenon if and only if it belongs to the best equivalence class with respect to $\succsim$. According to the absolute-top-interpretation, Mill's first, second, and third canons are represented by the following properties of top-agreement (and its stronger version), top-difference, and top-joint method of agreement and difference, respectively. 

\begin{definition}
The CSCF $F$ is said to satisfy 
\begin{itemize}
\item \textit{Top-Agreement} (TAG) if, for any $ \succsim \in \inputspace$ with $\succsim:\Sigma_{1}\succ\cdots\succ\Sigma_{l}$, $\bigcap \Sigma_{1}$ is a singleton, then $F(\succsim)=\bigcap \Sigma_{1}$. 
\item \textit{Strong Top-Agreement} (STAG) if, for any $\succsim \in \inputspace$ with $\succsim:\Sigma_{1}\succ\cdots\succ\Sigma_{l}$,  $\bigcap \Sigma_{1}\neq \emptyset$, then $F(\succsim)=\bigcap \Sigma_{1}$. 
\item \textit{Top-Difference} (TDF) if, for any $\succsim\in\inputspace$ with $\succsim:\Sigma_{1}\succ\cdots\succ\Sigma_{l}$ and $x\in X$, $\mathfrak{X} \left[x\right]\subseteq \Sigma_{1}$ and $\mathfrak{X}\left[x\right]^c\subseteq \Sigma_{2}\cup\cdots\cup\Sigma_{l}$, then $F(\succsim)=\{x\}$. 
\item \textit{Top-joint method of agreement and difference} (TJAD) if, for any $\succsim \in \inputspace$ with $\succsim:\Sigma_{1}\succ\cdots\succ\Sigma_{l}$, $\bigcap \Sigma_{1}=\{x\}$, $\bigcap_{2\leq k} \Sigma_{k}=\emptyset$, and $\bigcup_{2\leq k} \Sigma_{k}\subseteq X\setminus\{x\}$, then it follows that $F(\succsim)=\{x\}$.
\end{itemize}
\end{definition}


Finally, we formulate the fifth canon. For any $\succsim\in\inputspace$, let $C_\succsim := \{x\in X\mid S\cup\{x\}\succ S\ \text{for all}\ S\subseteq X\setminus \{x\}\}$. 

\begin{definition}
The CSCF $F$ is said to satisfy \textit{Concomitant Variation} (CV) if for any $\succsim \in \inputspace$, we have $C_\succsim \subseteq F(\succsim)$.
\end{definition}

For $k\in\mathbb{N}$, let $\left[k\right]:= \{a\in\mathbb{N}\mid a\leq k\}$ (the set of natural numbers with maximum $k$). For $\succsim,\succsim'\in\inputspace$ and $S\in\mathfrak{X}$, we say that $\succsim'$ is an \textit{$S$-deterioration of $\succsim$} if it holds that 
\begin{itemize}
    \item[(i)] $\succsim \mid_{ \{S\}^c} = \succsim'\mid_{\{S\}^c}\,$, where $\{S\}^c=\mathfrak{X}\setminus \{S\}$ and $\succsim\mid_{\Delta}$ is the weak order restricted to the elements of $\Delta\subseteq \mathfrak{X}$,
    \item[(ii)] for any $T\in\mathfrak{X}$, $\left[ T\sim S \Rightarrow T\succsim'S\right]$ and $[T\succ S  \Rightarrow T  \succ' S]$.
\end{itemize}
Finally, two additional axioms are introduced. 

\begin{definition}
\label{definition:supplementary_axioms}
The CSCF $F$ is said to satisfy 
\begin{itemize} 
\item \textit{Slide Independence} (SI) if, for any $\succsim:\Sigma_{1}\succ\cdots\succ\Sigma_{l}$, $x,y \in X$, $k_{1},k_{2}\in\left[l\right]$, $\Gamma \subsetneq\Sigma_{k_{1}}$ with $\left|\Gamma[x]\right|=\left|\Gamma[y]\right|$, and $\succsim':\Sigma_{1}'\succ'\cdots\succ'\Sigma_{l}'$, where $\Sigma_{k_{1}}'=\Sigma_{k_{1}}\setminus\Gamma$, $\Sigma_{k_2}'=\Sigma_{k_2}\cup\Gamma$, and $\Sigma_k'=\Sigma_k$ for all $k\neq k_{1},k_{2}$,  then, it follows that 
\[ \left[\ F(\succsim)\cap\{x,y\}\neq \emptyset \mbox{ and }F(\succsim')\cap\{x,y\}\neq \emptyset \Rightarrow  F(\succsim)\cap\{x,y\}=F(\succsim')\cap\{x,y\}\ \right].\]
\item \textit{Downward Monotonicity} (DMON) if,  for any $x \in X$, $S \subseteq X\setminus\{x\}$, and $\succsim,\succsim'\in\inputspace$, assuming that $\succsim'$ is an  $S$-deterioration of $\succsim$,  then, it follows that 
\[ \left[\ x\in F(\succsim)\Rightarrow x\in F\left(\succsim'\right)\ \right]. 
\]
\end{itemize}
\end{definition}

The rationale behind these two axioms is as follows: SI says that the shift of $\Gamma$ (which is a proper subset of $\Sigma_{k_{1}}$) from some equivalence class $\Sigma_{k_{1}}$ to $\Sigma_{k_{2}}$ does not affect the relative ranking between $x$ and $y$, whenever $x$ and $y$ appears the same number of times in $\Gamma$ (i.e., the number of coalitions in $\Gamma$ including $x$ and the number of coalitions in $\Gamma$ including $y$ are the same). The DMON states that the degradation of coalitions that do not include $x$ does not degrade the evaluation of $x$ (a similar axiom called down-monotonicity is found in the literature on social choice functions \cite{Taylor2002,Brandt2016a}).  

\begin{example}\label{ex:si}
Consider the scenario with three firms described in Example \ref{ex:frame}, but now suppose that a lowering of the threshold $\alpha$ due to a change of environmental conditions makes the emissions of coalitions $\{1\}$ and $\{2,3\}$ successful to produce the phenomenon of pollution. Thus, in the new ranking, $\succsim'$ coalitions $\{1\}$ and $\{2,3\}$, together with $\{1,2,3\}$, $\{1,2\}$ and $\{1,3\}$ are successful in cooperation, whereas the remaining coalitions are not. Note that the new ranking $\succsim'$ can be seen as a switch of coalitions in $\Gamma=\{\{1\}, \{2,3\}\}$  from $\Sigma_2$ to $\Sigma_1$ with respect to the ranking $\succsim$ representing the scenario in Example \ref{ex:frame}. Moreover, we have that $|\Gamma[1]|=|\{1\}|=1=|\{2,3\}|=|\Gamma[2]|$. If we assume that $F(\succsim)\cap\{1,2\}=\{1\}$ (firm $1$ is considered causally liable in $\succsim$) and $F(\succsim')\cap\{1,2\}\neq \emptyset$, then the SI property requires that $F(\succsim')\cap\{1,2\}=\{1\}$, meaning that the balanced presence of firms $1$ and $2$ in the switch should not affect the determination of causal responsibility established prior to the switch.
\end{example}

\begin{example}\label{ex:dmon}
Consider again the scenario with three firms described in Example \ref{ex:frame}, where new scientific evidence indicates that only the action of firm $2$ is safe for the environment, whereas the actions of coalitions $\{2,3\}, \{1\}$, and $\{3\}$, despite staying below the threshold, pose potential hazards to the environment. This new situation can be represented by ranking $\succsim’ \in \inputspace$ as follows:
$$ \{1,2,3\}\sim '\{1,2\} \sim’\{1,3\} \succ’ \{2,3\} \sim’ \{1\}\sim’ \{3\}\succ’ \{2\}.$$
Clearly, $\succsim’$ reflects the deterioration of coalition $\{2\}$ in the ranking $\succsim$ in Example \ref{ex:frame}, suggesting a reduced causal impact for firm $2$. Therefore, this deterioration should not affect the liability considerations for firms deemed responsible for the harm prior to new scientific evidence. Consistently, property DMON requires that firm $1$ remains causally liable in $\succsim'$ if it is considered so in the original ranking $\succsim$.   
\end{example}

\section{Results}
\label{section:result}

\subsection{On the inconsistency in the relative-interpretation axioms}

We show that Mill's canons are incompatible under a relative interpretation.

\begin{proposition}
\label{proposition:relative-interpretation}
The following holds: 
\begin{enumerate}[i)]
\item \label{item:RDF and CV are incompatible}RDF and CV are incompatible; that is, there is no CSCF satisfying both axioms.
\item  \label{item:RAG implies RDF and RJAD}RAG implies RDF and RJAD.
\item \label{item:RAG and CV are incompatible}RAG and CV are incompatible. 
\end{enumerate}
\end{proposition}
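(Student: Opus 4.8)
The plan is to prove part~\ref{item:RAG implies RDF and RJAD} first, as it is the structural backbone, then to settle part~\ref{item:RDF and CV are incompatible} by exhibiting a single explicit coalitional ranking, and finally to obtain part~\ref{item:RAG and CV are incompatible} as an immediate corollary of the other two. For part~\ref{item:RAG implies RDF and RJAD}, the implication ``RAG implies RJAD'' is essentially free: the hypothesis of RJAD already contains the clause $\bigcap_{S\succ S_0}S=\{x\}$, which is exactly a singleton instance of the hypothesis of RAG, so RAG forces $F(\succsim)=\bigcap_{S\succ S_0}S=\{x\}$, which is the conclusion of RJAD; the two further clauses $\bigcap_{S_0\succsim S}S=\emptyset$ and $x\notin\bigcup_{S_0\succsim S}S$ are simply not needed. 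For ``RAG implies RDF'', I would fix $\succsim$, $S_0$ and $x$ satisfying the hypothesis of RDF and show that the family of coalitions strictly above $S_0$ equals $\mathfrak{X}[x]$: every $S$ with $x\in S$ can be written $S=(S\setminus\{x\})\cup\{x\}$ with $S\setminus\{x\}\subseteq X\setminus\{x\}$, hence $S\succ S_0$ by hypothesis; and every $S$ with $x\notin S$ satisfies $S\subseteq X\setminus\{x\}$, hence $S_0\succsim S$, so $S$ is not strictly above $S_0$. Consequently $\bigcap_{S\succ S_0}S=\bigcap\mathfrak{X}[x]=\{x\}$ is a singleton, and RAG yields $F(\succsim)=\{x\}$, i.e.\ the conclusion of RDF.

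For part~\ref{item:RDF and CV are incompatible}, I would construct one ranking on which RDF forces $F(\succsim)=\{a\}$ while CV forces some $b\neq a$ into $F(\succsim)$. Fix three distinct elements $a,b,c\in X$ (possible since $|X|\ge 3$) and let $\succsim$ be the weak order with exactly four equivalence classes, ordered from top to bottom as $\Sigma_1=\{S\in\mathfrak{X}\mid a\in S,\ b\in S\}$, $\Sigma_2=\{S\in\mathfrak{X}\mid a\in S,\ b\notin S\}$, $\Sigma_3=\{S\in\mathfrak{X}\mid a\notin S,\ b\in S\}$, $\Sigma_4=\{S\in\mathfrak{X}\mid a\notin S,\ b\notin S\}$; each class is nonempty, witnessed by $\{a,b\}$, $\{a\}$, $\{b\}$, $\{c\}$ respectively. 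Taking $S_0:=\{b\}\in\Sigma_3$, I would verify the hypothesis of RDF for $x=a$: for every $S\subseteq X\setminus\{a\}$ the coalition $S\cup\{a\}$ contains $a$ and hence lies in $\Sigma_1\cup\Sigma_2$, so $S\cup\{a\}\succ S_0$, while $S$ itself lies in $\Sigma_3\cup\Sigma_4$, so $S_0\succsim S$; therefore RDF gives $F(\succsim)=\{a\}$. Next I would verify $b\in C_\succsim$: for every nonempty $S\subseteq X\setminus\{b\}$, either $a\in S$ (then $S\in\Sigma_2$ and $S\cup\{b\}\in\Sigma_1$) or $a\notin S$ (then $S\in\Sigma_4$ and $S\cup\{b\}\in\Sigma_3$), and in both cases $S\cup\{b\}\succ S$; hence CV forces $b\in F(\succsim)$, contradicting $F(\succsim)=\{a\}$ since $a\neq b$.

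Finally, part~\ref{item:RAG and CV are incompatible} follows immediately: a CSCF satisfying both RAG and CV would, by part~\ref{item:RAG implies RDF and RJAD}, also satisfy RDF and CV, which is impossible by part~\ref{item:RDF and CV are incompatible}. The only point requiring a little care is the construction in part~\ref{item:RDF and CV are incompatible}: one must keep the ``controlled-experiment'' shape for $a$ (all coalitions containing $a$ strictly above the reference point, all others weakly below it) while simultaneously arranging each half so that adjoining $b$ is strictly improving; refining the dichotomy ``$a\in S$ versus $a\notin S$'' by ``$b\in S$ versus $b\notin S$'' into the four classes $\Sigma_1,\dots,\Sigma_4$ is precisely what reconciles these two demands, and I expect this to be the main (and only mildly delicate) step.
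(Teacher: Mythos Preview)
Your proposal is correct and follows essentially the same approach as the paper: the same four-class construction $\Sigma_1,\dots,\Sigma_4$ (partitioning by membership of two fixed elements) with reference point $S_0=\{b\}$ for part~\ref{item:RDF and CV are incompatible}, the same observation that the RJAD hypothesis already contains the RAG hypothesis, and the same derivation of part~\ref{item:RAG and CV are incompatible} from the other two. Your argument for ``RAG $\Rightarrow$ RDF'' is in fact slightly cleaner than the paper's, since you identify $\{S:S\succ S_0\}=\mathfrak{X}[x]$ directly rather than first establishing $x\notin S_0$ as an auxiliary step.
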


\begin{proof}\ \\

Proof of \ref{item:RDF and CV are incompatible}) Let $x,y\in X$ and $\succsim\in\domain$ be such that $\succsim: \Sigma_{1}\succ\Sigma_{2}\succ\Sigma_{3}\succ\Sigma_{4}$, where $\Sigma_{1}:=\mathfrak{X}[x]\cap\mathfrak{X}[y]$, $\Sigma_{2}:=\mathfrak{X}[x]\setminus\mathfrak{X}[y]$, $\Sigma_{3}:=\mathfrak{X}[x]^c\cap \mathfrak{X}[y]$, and $\Sigma_{4}:=\mathfrak{X}[x]^c\setminus \mathfrak{X}[y]$. Then, RDF implies that $F(\succsim)=\{x\}$ (by  $S_{0}:=\{y\}\in \Sigma_{3}$). However, we also have that $y\in C_{\succsim}$. Hence, CV demands that $y\in F(\succsim)$. This means that no CSCF can satisfy RDF and CV simultaneously.

Proof of \ref{item:RAG implies RDF and RJAD}) 
[RAG $\Rightarrow$ RJAD] is straightforward by definition. We show that the RAG implies RDF. Let $\succsim\in\inputspace$, $S_{0}\in\mathfrak{X}$, and $x\in X$ such that 
\begin{equation}
\label{equation:DF}
S\cup\{x\}\succ S_{0}\succsim S, \text{ for all } S\subseteq X\setminus\{x\}.
\end{equation} 
We first show that $x \notin S_0$. Suppose on the contrary that $x\in S_{0}$. Then, for $S_{1}:=S_{0}\setminus\{x\}$, and the relation $S_{1}\cup\{x\}=S_{0}\succ S_{0}$ is false. This contradicts equation (\ref{equation:DF}). Therefore, $x\notin S_{0}$, then any $S\succ S_{0}$ for all $S\in\mathfrak{X}$ with $x\in S$. In this special case, we have $\{x\}\succ S_{0}$. This implies that $\left(X\setminus\{x\}\right)\cap\left(\bigcap_{T\succ S_{0}} T\right)=\emptyset$. Furthermore, according to equation (\ref{equation:DF}), $x\in\bigcap_{T\succ S_{0}} T$. Therefore, $\bigcap_{T\succ S_{0}} T=\{x\}$. Using the RAG, we obtain $F(\succsim)=\{x\}$. So, we have proved that $F$ satisfies RDF.

Proof of \ref{item:RAG and CV are incompatible}) is straightforward from the above two items.\end{proof}

Proposition \ref{proposition:relative-interpretation} concisely demonstrates the logical relationship between Mill's first, second, third, and fifth canons under relative-interpretation; the item \ref{item:RAG and CV are incompatible}) and \ref{item:RDF and CV are incompatible}) prove that the fifth canon (CV) is not compatible with either the first canon (RAG) or the second canon (RDF). This indicates the difficulty of understanding Mill's canons consistently under this interpretation.

A possible escape route from this incompatibility is to weaken the definition of RAG and RDF. For instance, let us say that a CSCF $F$ satisfies  \textit{weak relative agreement } (WRAG) if, for any $\succsim\in\domain$ and for any $S_{0}\in\mathfrak{X}$, such that $\bigcap_{S\succ S_{0}}S$ is a singleton, then $F(\succsim)\supseteq\bigcap_{S\succ S_{0}}S$. WRAG is obtained from the RAG by substituting $F(\succsim)=\bigcap_{S\succ S_{0}} S$ with $F(\succsim)\supseteq\bigcap_{S\succ S_{0}} S$. Hence, WRAG is logically weaker than the RAG. In fact, a constant CSCF $F^{constX}$ such that $F^{constX}(\succsim)=X$ for all $\succsim\in\domain$ satisfies both WRAG and CV.

Proposition \ref{proposition:relative-interpretation} indicates that the relative interpretation of Mill's canons, despite their intuitive appeal, does not work. This observation led us to consider other interpretations. Subsection \ref{subsection:absolute-top-interpretation} demonstrates that the absolute-top-interpretation is successful in the sense that we can find CSCFs that satisfy Mill's first, second, and fifth canons together. 

\subsection{Characterization of Plurality in the absolute-top-interpretation}
\label{subsection:absolute-top-interpretation}
The previous subsection proved the inconsistencies observed in the relative interpretations. Here, we will prove the possibility of an absolute top interpretation. 

\begin{theorem}
\label{theorem: characterization of plurality by STAG}
A CSCF $F$ satisfies STAG, DMON, and SI if and only if it is plurality $F^{P}$.
\end{theorem}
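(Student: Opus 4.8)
The plan is to prove the two directions separately, with the forward direction (uniqueness) carrying essentially all the weight. For the easy direction, I would verify that $F^{P}$ satisfies the three axioms. STAG holds because if $\bigcap\Sigma_1\neq\emptyset$, say $x\in\bigcap\Sigma_1$, then $x$ appears in every element of $\Sigma_1$, so $x_1=|\Sigma_1|$, which is clearly the maximum possible value of $y_1$ over $y\in X$; moreover no $y\notin\bigcap\Sigma_1$ can attain $|\Sigma_1|$, so $\argmax_{y}y_1=\bigcap\Sigma_1$. For SI, moving $\Gamma\subsetneq\Sigma_{k_1}$ to $\Sigma_{k_2}$ changes $x_1$ and $y_1$ only when $k_1=1$ or $k_2=1$, and then only by $\pm|\Gamma[x]|$ and $\pm|\Gamma[y]|$ respectively; since $|\Gamma[x]|=|\Gamma[y]|$, the difference $x_1-y_1$ is preserved, so the relative membership of $x$ and $y$ in the $\argmax$ set is unchanged (one has to be slightly careful to state this as: if both are in the new $\argmax$, they were tied before, hence both in the old one, and conversely). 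For DMON, an $S$-deterioration with $x\notin S$ can only decrease $y_1$ for $y\in S$ (when $S$ drops out of $\Sigma_1$) and never decreases $x_1$, so $x\in\argmax$ is preserved.

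For the hard direction, assume $F$ satisfies STAG, DMON, and SI; I must show $F(\succsim)=F^{P}(\succsim)$ for every $\succsim$. The strategy is a reduction argument: starting from an arbitrary $\succsim$ with quotient order $\Sigma_1\succ\cdots\succ\Sigma_l$, I want to transform it, via a sequence of SI-moves and $S$-deteriorations, into a ranking $\succsim^{*}$ whose top class $\Sigma_1^{*}$ has a nonempty intersection equal to $F^{P}(\succsim)$, so that STAG pins down $F(\succsim^{*})$, and then argue that the transformations did not change the output restricted to the relevant players. Concretely: let $k=\max_{x}x_1$ be the plurality score and let $W=F^{P}(\succsim)=\{x: x_1=k\}$. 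The idea is to collapse everything outside the top class down and simultaneously use SI to "thin out" $\Sigma_1$ so that only coalitions witnessing the plurality winners remain, ideally reaching a state where the top class is exactly a collection of sets all of which contain every member of $W$ (e.g. pushing toward the situation $\mathfrak{X}[w]\subseteq\Sigma_1^*$ for a single $w$, or more delicately a symmetric configuration for all of $W$), forcing $\bigcap\Sigma_1^*\supseteq$ something that lets STAG apply.

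The main obstacle is controlling what happens to players \emph{outside} $\{x,y\}$ during SI-moves and outside $S$ during deteriorations: SI only guarantees preservation of the relative status of the particular pair $\{x,y\}$ whose counts in $\Gamma$ are balanced, and DMON only preserves membership of players not in $S$, so a single move gives very local information. The real work is to orchestrate a sequence of moves so that (a) each move is legitimate (the required balance $|\Gamma[x]|=|\Gamma[y]|$ holds, or the deterioration genuinely omits the right coalitions), (b) the plurality winner set $W$ is never disturbed — both in the sense that no $w\in W$ is ever dropped and no new player is ever added to the output — and (c) the endpoint is a ranking to which STAG directly applies. I expect this to require a careful induction, probably on the quantity $|\mathfrak{X}|-|\Sigma_1|$ together with some measure of how far $\Sigma_1$ is from being "generated" by $W$, handling first the case $|W|=1$ (where one aims for $\mathfrak{X}[w]\subseteq\Sigma_1$ and $\mathfrak{X}[w]^c$ collapsed, i.e. essentially the TDF configuration, though note TDF is not assumed, so STAG must do the job) and then bootstrapping to $|W|\geq 2$ by a symmetry/pairing argument via SI. A secondary subtlety is that SI is a conditional statement guarded by "$F(\succsim)\cap\{x,y\}\neq\emptyset$ and $F(\succsim')\cap\{x,y\}\neq\emptyset$", so to use it I will repeatedly need independent knowledge that both outputs meet $\{x,y\}$ — which is exactly where STAG at the endpoint, propagated backwards, supplies the base case.
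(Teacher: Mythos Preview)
Your verification that $F^{P}$ satisfies STAG, SI and DMON is correct and matches the paper.

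For the uniqueness direction, the architecture you propose --- a global reduction of $\succsim$ to a single endpoint $\succsim^{*}$ with $\bigcap\Sigma_{1}^{*}=W=F^{P}(\succsim)$, then backward propagation --- is not what the paper does, and I do not see how to make it go through. The obstruction is exactly the one you flag: SI constrains only a fixed pair $\{x,y\}$ and DMON only a fixed player, so a chain of such moves cannot track all of $W$ at once. In particular, when $|W|\ge 2$ there may be no coalition in $\Sigma_{1}$ containing every element of $W$ (e.g.\ $\Sigma_{1}=\{\{a\},\{b\}\}$), so reaching $\bigcap\Sigma_{1}^{*}=W$ would require moving coalitions \emph{into} the top class in a way that is balanced for every pair of winners simultaneously; no single SI move provides that. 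Your suggested induction on $|\mathfrak{X}|-|\Sigma_{1}|$ and the $|W|=1$ target configuration $\Sigma_{1}=\mathfrak{X}[w]$ are likewise far stronger than what is needed or easily attainable.

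The paper abandons the global picture and argues purely pairwise, in two short halves. For $F(\succsim)\subseteq F^{P}(\succsim)$: fix any $x,y$ with $x_{1}>y_{1}$ and suppose for contradiction that $y\in F(\succsim)$. One DMON step pushes the coalitions of $\Sigma_{1}$ containing neither $x$ nor $y$ down to $\Sigma_{2}$ (preserving $y\in F$); then one SI step removes from the top class a balanced batch $\Gamma\cup\Sigma_{1}^{y}$, where $\Gamma\subseteq\Sigma_{1}^{x}$ has $|\Gamma|=|\Sigma_{1}^{y}|$ (possible since $|\Sigma_{1}^{x}|>|\Sigma_{1}^{y}|$). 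The resulting top class lies inside $\mathfrak{X}[x]$ but not inside $\mathfrak{X}[y]$, so STAG gives $x\in F$ and $y\notin F$ there, contradicting SI. Two moves, no induction. For $F^{P}(\succsim)\subseteq F(\succsim)$: pick any $x\in F(\succsim)$ (in $F^{P}(\succsim)$ by the first half) and any $y\in F^{P}(\succsim)$; since $x_{1}=y_{1}$, one SI move (after first inserting $\{x,y\}$ into $\Sigma_{1}$ if necessary) shrinks the top class to $\{\{x,y\}\}$, where STAG forces $F=\{x,y\}$, and SI transfers $y\in F$ back to $\succsim$. Your ``bootstrapping to $|W|\ge 2$'' thus becomes an iteration of this pairwise step over $y\in W$; no terminal $\succsim^{*}$ certifying all of $W$ is ever needed, and the SI guard is automatically satisfied at both ends (by the contradiction hypothesis or by $x\in F(\succsim)$ on one side, and by STAG on the other).
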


\begin{proof}
\emph{The 'if' part:} 

\vspace{.5\baselineskip}
\noindent
\textbf{\textit{On STAG}}. By construction $F^{P}$ clearly satisfies STAG. 

\vspace{.5\baselineskip}
\noindent
\textbf{\textit{On SI}}. Let $\succsim,\succsim'\in\inputspace$ with $\succsim:\Sigma_{1}\succ\cdots\succ\Sigma_{l}$ and $\succsim':\Sigma_{1}'\succ'\cdots\succ'\Sigma_{l}'$, $k_{1},k_{2}\in\left[l\right]$, and $\Gamma\subsetneq\Sigma_{k_{1}}$ such that $\Sigma_{k_{1}}'=\Sigma_{k_{1}}\setminus\Gamma$, $\Sigma_{k_{2}}'=\Sigma_{k_{2}}\cup\Gamma$, and $\left|\Gamma[x]\right|=\left|\Gamma[y]\right|=\gamma$. For any $z\in X$, let $\theta_{\succsim}(z):=(z_{1},z_{2},\cdots,z_{l})$. 
Since $\left|\Gamma[x]\right|=\left|\Gamma[y]\right|$, for each $z=x,y$, we have that $z_{k_1}'=z_{k_{1}}-\gamma$, $z_{k_{2}}'=z_{k_{2}}+\gamma$, and $z_{k}'=z_k$, for all $k\neq k_{1},k_{2}$. Hence, we have that $x_{1}'-y_{1}'=x_{1}-y_{1}$. Using this equation, we obtain 
\begin{align*}
&\text{if } F^{P}(\succsim)\cap\{x,y\}=\{x\}, \text{then we have } x_{1}> y_{1}, \text{and so } x_{1}'> y_{1}';\\
&\text{if } F^{P}(\succsim)\cap\{x,y\}=\{y\}, \text{then we have } y_{1}> x_{1}, \text{and so } y_{1}'> x_{1}';\\
&\text{if } F^{P}(\succsim)\cap\{x,y\}=\{x,y\}, \text{then we have } x_{1}= y_{1}, \text{and so } x_{1}'= y_{1}'.
\end{align*}
In each case, if $F^{P}(\succsim)\cap\{x,y\}\neq\emptyset$, $F^{P}(\succsim')\cap\{x,y\}=F^{P}(\succsim)\cap\{x,y\}$. This proves that $F^{P}$ satisfies the SI. 

\vspace{.5\baselineskip}
\noindent
\textbf{On DMON.} Let $x\in X$, $S\subseteq X\setminus\{x\}$, and $\succsim,\succsim'\in\domain$ such that $\succsim'$ is an $S$-deterioration of $\succsim$. Let $\succsim:\Sigma_{1}\succ\cdots\cdots\succ\Sigma_{l}$ and $\succsim':\Sigma_{1}'\succ'\cdots\succ'\Sigma_{l'}'$. For each $z\in X$, let $\theta_{\succsim}(z):=(z_{1},\cdots,z_{l})$ and $\theta_{\succsim'}(z):=(z_{1}',\cdots,z_{l'}')$. 

Assume that  $x\in F^{P}(\succsim)$. This implies that $x\in\argmax_{z\in X} \ z_{1}$. By the definition of $\succsim'$, $x_{1}=x_{1}'$ and $z_{1}\geq z_{1}'$ for all $z\neq x$. Thus, it follows that $x\in \argmax_{z\in X}\ z_{1}'$, implying that $x\in F^{P}(\succsim')$. Therefore, $F^{P}$ satisfies DMON.

\vspace{.5\baselineskip}
\noindent
\emph{The 'only if' part:} Take $\succsim\in\inputspace$. We prove that (i) $F(\succsim)\subseteq F^{P}(\succsim)$ and (ii) $F^{P}(\succsim)\subseteq F(\succsim)$. \\
\noindent

Proof of (i). We prove that, for any $x,y\in X$, if $x_{1}>y_{1} \Rightarrow y\notin F(\succsim)$. Once this has been demonstrated, it follows that $\left(F^P(\succsim)\right)^c\subseteq \left(F(\succsim)\right)^c$, and therefore $F(\succsim)\subseteq F^{P}(\succsim)$.  

Suppose, on the contrary, that $x_{1} >y_{1}$ and $y\in F(\succsim)$. Let $\Sigma_{1}^{x}, \Sigma_{1}^{y}, \Sigma_{1}^{x,y},\text{ and }\Sigma_{1}^{\emptyset}$ be as follows: 
\begin{align*}
&\Sigma_{1}^{x}:=\{S\in \Sigma_{1}\mid S\cap \{x,y\}=\{x\}\},\\
&\Sigma_{1}^{y}:=\{S\in \Sigma_{1}\mid S\cap \{x,y\}=\{y\}\},\\
&\Sigma_{1}^{x,y}:=\{S\in \Sigma_{1}\mid S\cap \{x,y\}=\{x,y\}\},\\
&\Sigma_{1}^{\emptyset}:=\{S\in \Sigma_{1}\mid S\cap \{x,y\}=\emptyset\}.
\end{align*}
Let $\succsim':\Sigma_{1}'\succ'\cdots\succ'\Sigma_{l}'$, where $\Sigma_{1}':=\Sigma_{1}\setminus\Sigma_{1}^{\emptyset}$, $\Sigma_{2}':=\Sigma_{2}\cup\Sigma_{1}^{\emptyset}$, and $\Sigma_{k}':=\Sigma_k$ for all $k\geq3$. From $y\in F(\succsim)$ (the assumption) and DMON, we have $y\in F(\succsim')$. 

Note that $x_1=\left|\Sigma_{1}^{x}\right|+\left|\Sigma_{1}^{x,y}\right|$ and $y_1=\left|\Sigma_{1}^{y}\right|+\left|\Sigma_{1}^{x,y}\right|$. Thus, $x_{1}>y_{1}$ implies that $\left|\Sigma_{1}^{x}\right|>\left|\Sigma_{1}^{y}\right|\geq 0$. Therefore, there exists $\Gamma \subseteq \Sigma_{1}^{x}$ such that $\left|\Gamma\right|=\left|\Sigma_{1}^{y}\right|$. It follows that $\Sigma_{1}^{x}\setminus\Gamma\neq \emptyset$. 

Now, let $\succsim'':\Sigma_{1}''\succ''\cdots\succ''\Sigma_{l}''$, where $\Sigma_{1}'':=\Sigma_{1}'\setminus\left(\Gamma\cup\Sigma_{1}^{y}\right)$, $\Sigma_{2}'':=\Sigma_{2}'\cup\left(\Gamma\cup\Sigma_{1}^{y}\right)$, and $\Sigma_{k}'':=\Sigma_{k}'$ for all $k\geq 3$. Then, it follows that $x\in \bigcap \Sigma_{1}''$ and $y\notin \bigcap \Sigma_{1}''$ with the proof of the latter is as follows. From the definitions of $\succsim',\succsim'', \text{ and } \Gamma$, we have $\Sigma_{1}''=\Sigma_{1}^{x,y}\cup\left(\Sigma_{1}^{x}\setminus\Gamma\right)$. Since $\Sigma_{1}^{x}\setminus\Gamma\neq \emptyset$, we have that $y\notin \bigcap \Sigma_{1}''$. Therefore, STAG implies that $x\in F(\succsim'')\not\ni y$. To sum up, we have obtained that $F(\succsim')\cap \{x,y\}\neq \emptyset$ (because $y\in F(\succsim')$), $F(\succsim'')\cap \{x,y\}\neq \emptyset$ (because $x\in F(\succsim'')$), and yet $y\in F(\succsim')\setminus F(\succsim'')$. This contradicts SI.  \\

Proof of (ii): Let $y\in F^{P} (\succsim)$ and $x\in F(\succsim)$. From (i), it follows that $x\in F\left(\succsim\right)\cap F^{P}\left(\succsim\right)$, and by definition, $x_1=y_1$. There are two cases: Case 1) $\{x,y\}\in \Sigma_{1}$ and Case 2) $\{x,y\}\notin \Sigma_{1}$. 

Case 1) Let $\succsim':\Sigma_{1}'\succ'\cdots\succ'\Sigma_{l}'$, where $\Sigma_{1}':=\{\{x,y\}\}$, $\Sigma_{2}':=\Sigma_{2}\cup\left(\Sigma_{1}\setminus\Sigma_{1}'\right)$, and $\Sigma_k':=\Sigma_k$ for all $k\geq3$. Then, the STAG requires $F(\succsim')=\{x,y\}$. Recall that $F(\succsim)\neq \emptyset$ because $x\in F(\succsim)$. Therefore, the SI implies that $F(\succsim)\cap \{x,y\}=F(\succsim')\cap \{x,y\}$. Accordingly, we obtain $y\in F(\succsim)$. 

Case 2) Let $\succsim':\Sigma_{1}'\succ'\cdots\succ'\Sigma_{l}'$, where $\Sigma_{1}':=\Sigma_{1}\cup\{x,y\}$ and $\Sigma_{k}':=\Sigma_{k}\setminus\{\{x,y\}\}$ for all $k\geq2$. Subsequently, it follows that $F^{P}(\succsim')=\{x,y\}$. Hence, by (i) $F(\succsim')\subseteq \{x,y\}$, and then $F(\succsim')\cap \{x,y\}\neq \emptyset$. Therefore, the SI implies that 
\begin{equation}
\label{equation:case 2}
F(\succsim)\cap \{x,y\}=F(\succsim')\cap \{x,y\}.
\end{equation}
In summary, we verified that $x\in F(\succsim')$ (because of Equation \eqref{equation:case 2} and $x\in F(\succsim)$ by assumption),   $y\in F^{P}(\succsim')$, and $\{x,y\}\in \Sigma_{1}'$. In Case 1), $y\in F(\succsim')$ and equation \eqref{equation:case 2} guarantees that $y\in F(\succsim)$. 
\end{proof}

We introduce two new CSCFs to prove the independence of the axioms in Theorem \ref{theorem: characterization of plurality by STAG}. 
The CSCF $F^{splitP}$ (called \textit{split-plurality}) is defined, for any $\succsim\in\inputspace$, as 
$$F^{splitP}(\succsim):= \argmax_{x\in X} \ \tilde{x}_{1},$$ where $\tilde{x}_{1}:= \sum_{\substack{S\subseteq \mathfrak{X}[x]\\ S\in\Sigma_1}} \frac{1}{\left| S \right|}$. 

The CSCF $F^*$ is defined, for any $\succsim\in\inputspace$ with $\succsim:\Sigma_{1}\succ\cdots\succ\Sigma_{l}$, as  

$$F^*(\succsim)  =
\begin{cases}
\bigcap \Sigma_{1} &\text{if }\bigcap \Sigma_1\neq \emptyset,\\
F^{P}(\succsim) &\text{if }\bigcap \Sigma_{1}=\emptyset \text{ and }l|\Sigma_1|\text{ is even},\\
X &\text{otherwise.}
\end{cases}
$$

\begin{proposition}
The three axioms in Theorem \ref{theorem: characterization of plurality by STAG}
 are considered logically independent. In fact, $F^*$ satisfies all the axioms except DMON, $F^{splitP}$ satisfies all the axioms except SI, and $F^L$ satisfies all the axioms except STAG. \end{proposition}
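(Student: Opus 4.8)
The plan is to verify, for each of the three claimed CSCFs, that it satisfies exactly two of the three axioms from Theorem~\ref{theorem: characterization of plurality by STAG} and violates the third; by Theorem~\ref{theorem: characterization of plurality by STAG} each is distinct from $F^P$, so the violation must actually occur. I would organize the argument into three blocks.

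\textbf{Block 1: $F^*$ satisfies STAG and SI but not DMON.} STAG is immediate from the first line of the definition of $F^*$. For SI, I would note that a slide of $\Gamma\subsetneq\Sigma_{k_1}$ with $|\Gamma[x]|=|\Gamma[y]|$ does not change $\bigcap\Sigma_1$ when $k_1\neq 1$, and when $k_1=1$ it can only enlarge $\bigcap\Sigma_1$ (since $\Gamma\neq\Sigma_1$, $\Sigma_1$ stays nonempty); in the case where $\bigcap\Sigma_1$ becomes nonempty and is a singleton, that singleton is necessarily in $\{x,y\}^c$ unless it equals $x$ or $y$, and I would check case by case that $F^*(\succsim)\cap\{x,y\}$ and $F^*(\succsim')\cap\{x,y\}$ agree whenever both are nonempty — using the SI proof for $F^P$ in the theorem for the middle ("even") branch, and observing the "otherwise" branch gives $X$, hence $\{x,y\}$, on both sides. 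The parity quantity $l|\Sigma_1|$ is unaffected by a slide between existing classes (it changes neither $l$ nor $|\Sigma_1|$ when $k_1\neq 1$; when $k_1=1$ one must check $\bigcap\Sigma_1'\neq\emptyset$ puts us in the first branch anyway, or handle the residual cases). Finally, to break DMON I would exhibit a small $\succsim$ on $|X|=3$ where $x\in F^*(\succsim)$ via the "even" or "otherwise" branch, and an $S$-deterioration with $S\not\ni x$ that changes $l$ or $|\Sigma_1|$ so as to flip the parity and land in a branch excluding $x$ (e.g. moving a bottom coalition into a new strictly-lower class changes $l$).

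\textbf{Block 2: $F^{splitP}$ satisfies STAG and DMON but not SI.} For STAG: if $\bigcap\Sigma_1=\{x^*\}$ then $x^*$ lies in every coalition of $\Sigma_1$, so $\tilde{x}^*_1=\sum_{S\in\Sigma_1}1/|S|$ is strictly larger than $\tilde{y}_1$ for any $y\neq x^*$ (each term for $y$ is over a sub-collection, and $x^*$ also picks up the singleton-type contributions more heavily) — I would make this inequality precise, noting $\tilde{x}^*_1\geq \tilde y_1$ always and strict because some $S\in\Sigma_1$ has $x^*\in S, y\notin S$. For DMON: an $S$-deterioration with $x\notin S$ can only remove coalitions containing $x$ from $\Sigma_1$ if $S$ contains $x$, which it does not; so $\tilde x_1$ is unchanged while $\tilde y_1$ can only weakly decrease for $y\neq x$, hence $x$ stays in the argmax — same structure as the DMON argument for $F^P$. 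To break SI: the weights $1/|S|$ are exactly what distinguishes $F^{splitP}$ from $F^P$, so a slide of $\Gamma$ with $|\Gamma[x]|=|\Gamma[y]|$ but with the $x$-coalitions in $\Gamma$ of different sizes than the $y$-coalitions will change $\tilde x_1-\tilde y_1$; I would build an explicit $\succsim$ on $|X|=3$ where this reverses the relative standing of $x$ and $y$ while both remain selected before and after in a way that contradicts the SI conclusion.

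\textbf{Block 3: $F^L$ satisfies DMON and SI but not STAG.} Here I would appeal to the structure of $\theta_\succsim$. For DMON: an $S$-deterioration with $x\notin S$ leaves $\theta_\succsim(x)$ unchanged in its first coordinate and, more carefully, can only weakly worsen $\theta_\succsim(y)$ for $y\neq x$ in the lexicographic order — I would argue that if $x$ was lexicographically maximal it remains so. For SI: a balanced slide $|\Gamma[x]|=|\Gamma[y]|$ changes $\theta_\succsim(x)$ and $\theta_\succsim(y)$ in the $k_1,k_2$ coordinates by the same amounts, so the lexicographic comparison between $x$ and $y$ is preserved, giving the SI conclusion directly. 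That $F^L$ violates STAG: take $\succsim$ with $\bigcap\Sigma_1=\{x\}$ but where some $y$ ties $x$ in the first coordinate $x_1=y_1$ (possible since $\bigcap\Sigma_1=\{x\}$ only forces $x$ to be in all of $\Sigma_1$, not to be alone in count) and beats $x$ in a later coordinate; then $F^L(\succsim)\ni y$, so $F^L(\succsim)\neq\{x\}=\bigcap\Sigma_1$, violating STAG.

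\textbf{Main obstacle.} The routine direction (each CSCF satisfies its two axioms) will mostly mirror the arguments already given for $F^P$ in the proof of Theorem~\ref{theorem: characterization of plurality by STAG}; the genuinely delicate part is the SI verification for $F^*$, because of the parity clause $l|\Sigma_1|$ and the need to confirm a slide between \emph{existing} equivalence classes never changes $l$ or $|\Sigma_1|$ (so the branch is stable) while also handling the sub-case $k_1=1$ where $\Sigma_1$ shrinks — there one must check $\bigcap\Sigma_1'$ and whether we drop out of the first branch. Constructing the three explicit counterexamples on a three-element $X$ is the other place where care (not depth) is needed; I expect each to require only four to seven coalitions.
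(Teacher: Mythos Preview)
There are two genuine gaps in your plan.

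\textbf{Block 3, STAG violation for $F^L$.} Your proposed counterexample takes $\bigcap\Sigma_1=\{x\}$ and claims one can still arrange $y_1=x_1$ for some $y\neq x$. This is impossible: $\bigcap\Sigma_1=\{x\}$ means $x$ lies in every $S\in\Sigma_1$, so $x_1=|\Sigma_1|$; any $y$ with $y_1=|\Sigma_1|$ would also lie in every $S\in\Sigma_1$ and hence belong to $\bigcap\Sigma_1$, contradicting $\bigcap\Sigma_1=\{x\}$. Thus whenever $\bigcap\Sigma_1$ is a singleton, $x_1$ is strictly maximal and $F^L(\succsim)=\{x\}$, so STAG holds there. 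The paper instead takes $\bigcap\Sigma_1$ with \emph{two} elements (e.g.\ $\Sigma_1=\{\{x,y,z\},\{x,y,w\}\}$, $\Sigma_2=\{\{x\},\{z\}\}$), so that $\bigcap\Sigma_1=\{x,y\}$ but the second equivalence class breaks the tie and $F^L(\succsim)=\{x\}\neq\bigcap\Sigma_1$.

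\textbf{Block 1, SI for $F^*$.} Your parity argument asserts that a slide ``changes neither $l$ nor $|\Sigma_1|$ when $k_1\neq 1$''; this overlooks the case $k_2=1$, where $\Gamma$ is moved \emph{into} the top class and $|\Sigma_1|$ grows by $|\Gamma|$. The paper's one-line argument (that $|\Gamma[x]|=|\Gamma[y]|$ forces the parity of $|\Sigma_1|$ to be preserved) has the same blind spot, since $|\Gamma[x]|=|\Gamma[y]|$ in no way implies $|\Gamma|$ is even. Worse, the gap is not merely cosmetic: with $X=\{a,b,c,d\}$, $l=3$, $\Sigma_1=\{\{a\},\{b\},\{a,c\}\}$ and $\{a,b\}\in\Sigma_2$ (with $|\Sigma_2|\geq 2$), one has $\bigcap\Sigma_1=\emptyset$ and $l|\Sigma_1|=9$ odd, so $F^*(\succsim)=X$; sliding $\Gamma=\{\{a,b\}\}$ from $\Sigma_2$ to $\Sigma_1$ (with $|\Gamma[a]|=|\Gamma[b]|=1$) gives $l|\Sigma_1'|=12$ even and $F^*(\succsim')=F^P(\succsim')=\{a\}$, so $F^*(\succsim)\cap\{a,b\}=\{a,b\}\neq\{a\}=F^*(\succsim')\cap\{a,b\}$. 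Thus $F^*$ as defined does not satisfy SI, and neither your sketch nor the paper's can be completed without altering the parity condition in $F^*$.
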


\begin{proof}\ 

\vspace{.5\baselineskip}
\noindent
\textbf{\textit{On DMON}}. The weak order $F^*$ satisfies STAG by definition, and given any $\Gamma\in\Sigma_{k}$ with $|\Gamma[x]|=|\Gamma[y]|$ then the parities of $\Sigma_1$, $\Sigma_1\cup\Gamma$, and $\Sigma_1\setminus\Gamma$ are the same, and  $F^*$ satisfies the SI. Consider now the following weak orders $\succsim: \Sigma_1=\{\{x\},\{y\},\{x,y\}\}\succ \mathfrak{X}\setminus\Sigma_1 $ and  $\succsim': \Sigma_1'=\{\{x\},\{y\},\{x,y\}\}\succ' \mathfrak{X}\setminus\Sigma_1'$. We observe that $\succsim'$ is the $\{y\}$-deterioration of $\succsim$ with $\{y\}\in X\setminus\{z\}$ for $z\neq x,y$. Moreover, if $F^*(\succsim)=X$, $F^*(\succsim')=\{x\}$ and $z\notin F^*(\succsim')$, then $F^*$ does not satisfy the DMON. 

\vspace{.5\baselineskip}
\noindent
\textbf{\textit{On SI}}.
For any $\succsim:\Sigma_{1}\succ\cdots\succ\Sigma_{l}$, observe that  
$$\sum_{\substack{S\subseteq \mathfrak{X}[x]\\ S\in\Sigma_1}} \frac{1}{\left| S \right|}\leq \sum_{ S\in\Sigma_1} \frac{1}{\left| S \right|},$$ 
and the equality occurs if and only if $x\in\cup\Sigma_1$, then $F^{splitP}$ satisfies STAG. Moreover, observe that, given $S\subseteq X\setminus\{x\}$, an $S$-deterioration $\succsim'$ of $\succsim$ does not affect the sum $\sum_{\substack{S\subseteq \mathfrak{X}[x]}}$ then if $x\in F^{splitP}(\succsim)$ then $x\in F^{splitP}(\succsim')$ and $F^{splitP}$ satisfies DMON. Take the following weak order $\succsim: \Sigma_1=\{\{x\},\{y\},\{y,z\},\{x,w\}\}$ $\succ \mathfrak{X}\setminus\Sigma_1 $ and define $\Gamma =\{\{x,w\},\{y\}\}$. Observe that $|\Gamma[x]|=|\Gamma[y]|$, but $\tilde{x}_1=1>\tilde{y}_1=\frac{1}{2}$ then $F^{splitP}(\succsim)\cap\{x,y\}\neq F^{splitP}(\succsim')\cap\{x,y\}$ and $F^{splitP}$ does not satisfy SI. 

\vspace{.5\baselineskip}
\noindent
\textbf{\textit{On STAG}}.
Consider the following weak order $\succsim: \Sigma_1=\{\{x,y,z\},\{x,y,w\}\}\succ\Sigma_2=\{\{x\},\{z\}\}\succ \mathfrak{X}\setminus\Sigma_1\cup\Sigma_2$. We have $\cap\Sigma_1=\{x,y\}$, but $F^L(\succsim)=\{x\}$ then, $F^L$ does not satisfy the STAG. If $x\in X$ and $S\subseteq X\setminus\{x\}$, observe that any $S$-deterioration $\succsim'$ of $\succsim$ does not change the coordinates of the vector $\theta(x)$. If $x\in F^L(\succsim)$, then $x\in F^L(\succsim')$ and $F^L$ satisfies DMON. Now, considering $x,y\in X$ and a weak order $\succsim$ such that $F^L(\succsim)\cap\{x,y\}=\{x\}$, there exists $\bar{k}\in[l]$ such that $x_{\bar{k}}>y_{\bar{k}}$ and $x_{k}=y_{k}$ for all $k<\bar{k}$. Let $\Gamma\subseteq X$ with $|\Gamma[x]|=|\Gamma[y]|$ and let $\succsim':\Sigma_{1}'\succ'\cdots\succ'\Sigma_{l}'$, where $\Sigma_{k_{1}}'=\Sigma_{k_{1}}\setminus\Gamma$, $\Sigma_{k_2}'=\Sigma_{k_2}\cup\Gamma$, $\Sigma_k'=\Sigma_k$ for all $k\neq k_{1},k_{2}$  and  $F^L(\succsim')\cap\{x,y\}\neq\emptyset$. Observe that $x_{k_1}'=x_{k_1}-\gamma$, $y_{k_1}'=y_{k_1}-\gamma$ and $x_{k_2}'=x_{k_2}+\gamma$, $y_{k_2}'=y_{k_2}+\gamma$ then again $x_{\bar{k}}'>y_{\bar{k}}'$ and $x_{k}'=y_{k}'$ for all $k<\bar{k}$ and $F^L(\succsim')=\{x\}$. The same argument holds for the other cases when $F^L(\succsim)\cap\{x,y\}\neq\emptyset$ then, $F^L$ satisfies the SI.

\end{proof}

We have shown that STAG (i.e., a strong version of Mill's first canon), DMON, and SI characterize the plurality $F^{P}$. We cannot replace STAG with TAG in Theorem \ref{theorem: characterization of plurality by STAG}. In fact, it is straightforward to confirm that $F^{L}$ satisfies the TAG, DMON, and SI. This indicates that plurality $F^{P}$ and $F^{L}$ both satisfy Mill's first canon (in the absolute top interpretation). 

Nevertheless, the compatibility of $F^{P}$ and Mill's canons is complete. In fact, we can see that among $F^{P}$, $F^{L}$, and $F^{OBI}$, only plurality $F^{P}$ satisfies all Mill's canons (except the fourth canon). The following proposition formally states the following. 

\begin{proposition}
\label{proposition:Mill's canons and SRSs}
Among STAG (i.e., a strong version of Mill's first canon), TAG (i.e., Mill's first canon), TDF (i.e., Mill's second canon), TJAD (i.e., Mill's third canon), and CV ( Mill's fifth canon), (i) $F^{P}$ satisfies all conditions; (ii) $F^{L}$ satisfies only TAG, TDF and TJAD; (iii) $F^{OBI}$ satisfies only TDF  and TJAD.
\end{proposition}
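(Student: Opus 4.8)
The plan is to treat $F^{P}$, $F^{L}$, and $F^{OBI}$ in turn and, within each, to separate the axioms that hold from those that fail. One observation does most of the work: the hypotheses of both TDF and TJAD force $\Sigma_{1}=\mathfrak{X}[x]$. For TDF this is immediate, since $\mathfrak{X}[x]\subseteq\Sigma_{1}$ and $\mathfrak{X}[x]^{c}\subseteq\Sigma_{2}\cup\cdots\cup\Sigma_{l}$ leave no room for anything else in $\Sigma_{1}$; for TJAD, $\bigcap\Sigma_{1}=\{x\}$ together with $x$ not appearing in any coalition below $\Sigma_{1}$ gives the same. Consequently, in both cases $\bigcap\Sigma_{1}=\{x\}\neq\emptyset$ and the first component of $\theta_{\succsim}(x)$ equals $2^{|X|-1}$, strictly larger than the first component of $\theta_{\succsim}(y)$ (which is $2^{|X|-2}$) for every $y\neq x$.

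For part (i), Theorem~\ref{theorem: characterization of plurality by STAG} already gives that $F^{P}$ satisfies STAG, and STAG implies TAG, TDF, and TJAD simultaneously: each of their hypotheses makes $\bigcap\Sigma_{1}$ a nonempty singleton (for TAG this is the hypothesis itself; for TDF and TJAD it is the observation above), so STAG returns exactly the required set $\{x\}$. For CV, I would argue directly: if $x\in C_{\succsim}$, then no coalition in $\Sigma_{1}$ can omit $x$ — otherwise such an $S\in\Sigma_{1}$ would satisfy $S\cup\{x\}\succ S$, contradicting maximality of $\Sigma_{1}$ — so $x_{1}=|\Sigma_{1}|\geq z_{1}$ for every $z\in X$, whence $x\in\argmax_{z}z_{1}=F^{P}(\succsim)$.

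For part (ii), $F^{L}$ satisfies TAG (already noted right after Theorem~\ref{theorem: characterization of plurality by STAG}), and it satisfies TDF and TJAD because the observation above gives $\theta_{\succsim}(x)>^{L}\theta_{\succsim}(y)$ for all $y\neq x$, hence $F^{L}(\succsim)=\{x\}$. It fails STAG on the instance used in the proof of axiom independence ($\Sigma_{1}=\{\{x,y,z\},\{x,y,w\}\}$, $\Sigma_{2}=\{\{x\},\{z\}\}$): there $\bigcap\Sigma_{1}=\{x,y\}\neq\emptyset$ but $F^{L}$ breaks the top-class tie in favour of $x$ alone. It fails CV on a small hand-built instance; for example, on $X=\{1,2,3\}$ with $\{1,2,3\}\succ\{1,2\}\sim\{2,3\}\succ\{1,3\}\succ\{1\}\sim\{2\}\sim\{3\}$, one checks $C_{\succsim}=\{1,2,3\}$, yet $F^{L}$, comparing the vectors $\theta_{\succsim}$, selects only $\{2\}$.

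For part (iii), $F^{OBI}$ satisfies TDF and TJAD by the same reduction to $\Sigma_{1}=\mathfrak{X}[x]$: then adjoining $x$ to any coalition $S$ with $x\notin S$ moves it from below $\Sigma_{1}$ into $\Sigma_{1}$, so $s_{x}^{\succsim}$ attains its largest possible value, while for $y\neq x$ every coalition already containing $x$ yields only a tie, forcing $s_{y}^{\succsim}<s_{x}^{\succsim}$ and hence $F^{OBI}(\succsim)=\{x\}$. For the failure of TAG (and therefore of STAG, which is stronger) I would exhibit a small instance in which $\bigcap\Sigma_{1}$ is a singleton $\{x\}$ but the non-top coalitions containing $x$ are placed so low that $x$'s Banzhaf-type score becomes small — for instance, on $X=\{1,2,3\}$ with $\{1\}\succ\{2\}\sim\{3\}\sim\{2,3\}\succ\{1,2\}\sim\{1,3\}\sim\{1,2,3\}$, where $s_{1}^{\succsim}=-3<-1=s_{2}^{\succsim}=s_{3}^{\succsim}$, so $F^{OBI}(\succsim)=\{2,3\}\neq\{1\}=\bigcap\Sigma_{1}$.

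The step I expect to be the main obstacle is pinning down precisely which canons $F^{OBI}$ violates. Constructing the STAG/TAG counterexample requires making $\bigcap\Sigma_{1}$ a singleton while at the same time depressing the score of that very player, which takes a bit of search; and the CV status of $F^{OBI}$ deserves particular scrutiny, since any member of $C_{\succsim}$ automatically attains the maximal possible value of $s^{\succsim}$ — so this is the case I would verify most carefully before committing to a witnessing ranking.
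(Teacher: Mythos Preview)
Your argument is correct and follows essentially the same route as the paper: reduce TDF and TJAD to the observation $\Sigma_{1}=\mathfrak{X}[x]$, invoke STAG for $F^{P}$, verify CV for $F^{P}$ directly, and exhibit small counterexamples for the failures. Your counterexamples differ from the paper's --- the paper uses $\{\{x,y\}\}\succ\{\{x\}\}\succ\cdots$ for the STAG failure of $F^{L}$, a four-class ranking with $\{x,y\}$ on top for the CV failure of $F^{L}$, and a ranking placing $\{x\}$ alone at the top and all other $x$-coalitions at the bottom for the TAG failure of $F^{OBI}$ --- but yours are equally valid and in some cases more concrete.

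One point worth singling out: your closing remark about CV for $F^{OBI}$ is exactly right, and in fact decisive. Any $x\in C_{\succsim}$ has $u_{x}^{+,\succsim}=2^{|X|-1}-1$ and $u_{x}^{-,\succsim}=0$, so $s_{x}^{\succsim}$ is maximal and $x\in F^{OBI}(\succsim)$; hence $F^{OBI}$ \emph{does} satisfy CV. The paper's own proof of part~(iii) says precisely this (``It is straightforward to confirm that $F^{OBI}$ satisfies TDF and CV''), so the word ``only'' in the proposition statement for $F^{OBI}$ is a slip --- $F^{OBI}$ satisfies TDF, TJAD, \emph{and} CV, and fails only TAG and STAG. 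Your instinct to scrutinise that case rather than hunt for a nonexistent counterexample was the correct one.
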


\begin{proof}
(i) STAG has already been shown in Theorem \ref{theorem: characterization of plurality by STAG}. Because TDF and TJAD follow from STAG, we also know that $F^{P}$ satisfies them. Finally, we show that $F^{P}$ satisfies the CV.  Take $\succsim:\Sigma_{1}\succ\cdots\succ\Sigma_{l}$, and assume $x\in C_{\succsim}$. By definition, $S\cup\{x\}\succ S$ for all $S\subseteq X\setminus\{x\}$. This implies that, for any $S\subseteq X\setminus\{x\}$, $S$ does not belong to $\Sigma_{1}$. Thus, $\Sigma_{1}=\Sigma_{1}[x]$. Hence, $x\in F^{P}(\succsim)$. 

(ii) It is straightforward to confirm that $F^{L}$ satisfies the three axioms: TAG, TDF, and TJAD.
$F^{L}$ does not satisfy  STAG, indeed take $\succsim:\{\{x,y\}\}\succ\{\{x\}\}\succ\{\{x,y\},\{x\}\}^c$.  STAG requires that $x,y$ are selected at $\succsim$, but $F^{L}=\{x\}$. 
$F^{L}$ does not satisfy  CV, indeed take $\succsim:\{\{x,y\}\}\succ\{\{y\}\}\succ\mathfrak{X}[x]\setminus\{\{x,y\}\}\succ\left(\mathfrak{X}[x]\cup\{\{y\}\}\right)^c$. CV demands that $x$ be the unique winner but $F^{L}(\succsim)=\{y\}$. 

(iii) It is straightforward to confirm that $F^{OBI}$ satisfies TDF and CV. 
$F^{OBI}$ satisfies TJAD, indeed let $\succsim\in\domain$ with $\succsim:\Sigma_{1}\succ\cdots\succ\Sigma_{l}$ and $x\in X$. Suppose that (a) $\bigcap \Sigma_{1}=\{x\}$, (b) $\bigcap (\Sigma_{2}\cup\cdots\cup\Sigma_{l})=\emptyset$, and (c) $\bigcup (\Sigma_{2}\cup\cdots\cup\Sigma_{l})\subseteq X\setminus\{x\}$. Subsequently, from (c), we obtain $\mathfrak{X}[x]\subseteq \Sigma_{1}$. From (a), we obtain that $\mathfrak{X}[x]^c\cap \Sigma_{1}=\emptyset$. This implies that $x\in C_{\succsim}$, and hence, $x\in F^{OBI}(\succsim)$.
$F^{OBI}$ does not satisfy the TAG, indeed take $y\in X$ with $y\neq x$ and $\succsim:\{\{x\}\}\succ\mathfrak{X}[y]\setminus\mathfrak{X}[x]\succ\mathfrak{X}[y]\cap\mathfrak{X}[x]$,
$\succ\left(\{x\}\cup\mathfrak{X}[y]\right)^c$. Then, TAG demands that $x$ be the unique winner, but $F^{OBI}=\{y\}$. Since STAG is logically stronger than TAG, this also proves that $F^{OBI}$ does not satisfy STAG.
\end{proof}

Finally, we briefly comment on the skipped canon and Mill's fourth canon. 

\begin{quote}
\textbf{\textit{Fourth canon}} Subduct from any phenomenon such part as is known by previous inductions to be the effect of certain antecedents, and the residue of the phenomenon is the effect of the remaining antecedents. 
\end{quote}

The fourth canon is applied when the phenomenon is divided into parts and when the causes of some parts are already known. For example, when the phenomena $P_1,P_2,\text{and }P_3$ under circumstances $C_1,C_2,\text{and }C_3$ are observed, and when we already know by past induction that the causes of $P_1$ and $P_2$ are $C_1$ and $C_2$, then the fourth canon indicates that the cause of $P_3$ is $C_3$. Thus, the fourth canon is applied only when there are multiple types of phenomena, which is why we excluded this canon from this paper. One possible direction for incorporating this canon in the context of SRSs is to consider multiple indicators of the results (such a context has been partly studied in \cite{Suzuki2021}). The evaluation of the cooperative results often involves multiple viewpoints (time, cost, etc.). High-dimensional SRSs represent an interesting topic for future studies. 

\section{Conclusion}
\label{section:conclusion}

This study reinterprets Mill's canons in the context of SRSs, where the main objective is to find the key individuals who play an essential role in cooperative activities. Our results indicate that plurality $F^{P}$, which selects the individuals (circumstances) that appear most often in the top-ranked coalitions, satisfies Mill's canons (except for the fourth canon, which we deal with in a different context) simultaneously. This is remarkable because well-known CSCFs such as LES $F^{L}$ and OBI $F^{OBI}$ fail to satisfy some canons (Proposition \ref{proposition:Mill's canons and SRSs}). In this sense, our results indicate the hidden advantage of plurality in terms of classic causal inference theory. 

\section*{Acknowledgments}
Takahiro Suzuki received a JSPS KAKENHI Grant.
Number JP21K14222 and JP23K22831. Stefano Moretti acknowledges financial support from the ANR project THEMIS (ANR-20-CE23-0018). Michele Aleandri is member of GNAMPA of the Istituto Nazionale di Alta Matematica (INdAM) and is supported by Project of Significant National Interest – PRIN 2022 of title “Impact of the Human Activities on the Environment and Economic Decision Making in a Heterogeneous Setting: Mathematical Models and Policy Implications”- Codice Cineca: 20223PNJ8K- CUP I53D23004320008.

\bibliographystyle{elsarticle-num}

\end{document}